\documentclass[10pt,conference,hidelinks]{IEEEtran}\IEEEoverridecommandlockouts
\usepackage{epsfig,graphicx,subfigure,psfrag,amsmath,cases}
\usepackage{latexsym,amssymb,amsmath,epsfig,subfigure,algorithm,mathtools}
\usepackage{algorithmic}
\usepackage{color}
\usepackage{url}
\usepackage{scrtime}
\usepackage{cite}
\usepackage{subfigure}
\usepackage{bbding}
\usepackage{multicol}

\author{Zhiqiang Wei, Derrick Wing Kwan Ng, and Jinhong Yuan\\
School of Electrical Engineering and Telecommunications, the University of New South Wales\\
Email: zhiqiang.wei@student.unsw.edu.au; w.k.ng@unsw.edu.au; j.yuan@unsw.edu.au\vspace{-5mm}}

\title{\vspace{-3mm}Beamwidth Control for NOMA in Hybrid mmWave Communication Systems\vspace{-2mm}}

\newtheorem{Thm}{Theorem}
\newtheorem{proof}{proof}

\newtheorem{T-Prob}{Transformed Problem}

\newtheorem{Remark}{Remark}

\newcommand{\abs}[1]{\lvert#1\rvert}

\textheight 9.75 in  

\begin{document}
\maketitle
\begin{abstract}
In this paper, we propose a beamwidth control-based non-orthogonal multiple access (NOMA) scheme for hybrid millimeter wave (mmWave) communication systems.
In particular, the proposed scheme allows multiple users in one NOMA group to share the same radio frequency chain and analog beam for superposition transmission.
To overcome the physical limit of the narrow analog beam, a beamwidth control approach is proposed to widen the analog beamwidth to facilitate the formation of NOMA groups.
Then, we characterize the main lobe power loss associated with the proposed beamwidth control and derive the asymptotically optimal analog beamformer to maximize the system sum-rate in the large number of antennas regime.
The system sum-rate gain of the proposed beamwidth control-based NOMA scheme compared to a baseline scheme adopting time division multiple access (TDMA) is analyzed.
Simulation results verify the accuracy of our performance analysis and unveil the importance of beamwidth control for practical mmWave NOMA systems.
\end{abstract}

\section{Introduction}
Millimeter wave (mmWave) communications have recently triggered and attracted tremendous research interests due to their potential in meeting the stringent requirements of ultra-high data rate for the future fifth-generation (5G) wireless networks\cite{Rappaport2013,wong2017key,AkdenizChannelMmWave,Zhang2018low}.
As a result, various mmWave system structures have been proposed in the literature to embrace the promising performance gain\cite{Gao2016Energy,lin2016energy,zhao2017multiuser}.
For example, a fully-access hybrid structure was designed such that each radio frequency (RF) chain is connected to all antennas through an individual group of phase shifters (PSs) \cite{zhao2017multiuser}.
This structure not only reduces substantially the power consumption of RF chains, but also retains the highly directional beamforming gain to compensate the severe propagation loss in mmWave channels.
Despite the rapid development in the transceiver design for mmWave communications\cite{Gao2016Energy,lin2016energy,zhao2017multiuser}, enabling multiple users to share the mmWave spectrum concurrently remains a challenging problem.

Conventionally, mmWave systems adopting orthogonal multiple access (OMA) scheme allocate only a single-user to one RF chain in a time slot\cite{Gao2016Energy,lin2016energy,zhao2017multiuser}.
Recently, non-orthogonal multiple access (NOMA)\cite{QiuLOMA,QiuLOMA2,Sun2016Fullduplex} has been applied to mmWave systems\cite{DingFRAB,Ding2017RandomBeamforming,Cui2017Optimal, WangBeamSpace2017}, which allows more than one users to share a RF chain simultaneously.
It was shown that NOMA can offer a higher spectral efficiency compared to conventional OMA schemes in various types of mmWave communication systems.
For instance, the authors in \cite{DingFRAB} proposed a NOMA transmission scheme in massive MIMO mmWave communication systems via exploiting the features of finite resolution of PSs.
Then, the outage performance analysis for mmWave NOMA systems with random beamforming was studied in \cite{Ding2017RandomBeamforming}.
Subsequently, the authors in \cite{Cui2017Optimal} proposed a user scheduling and power allocation design for the random beamforming mmWave NOMA  scheme.
In \cite{WangBeamSpace2017}, a beamspace mmWave MIMO-NOMA scheme based on a lens antenna array was studied, where a power allocation algorithm was designed to maximize the system sum-rate.
However, the analog beamwidth in mmWave communications is typically very narrow due to its high carrier frequency and the employment of massive number of antennas at transceivers.
Thus, only the users within the main lobe of a beam are suitable for clustering as a NOMA group due to the associated strong channel gains.
In contrast, users outside the main lobe suffer a dramatically signal power attenuation
and cannot maintain a sustainable communication link.
Hence, apply the results from  \cite{DingFRAB,Ding2017RandomBeamforming,Cui2017Optimal, WangBeamSpace2017} to mmWave systems can only create a small number of NOMA groups which limits the potential gain brought by NOMA.

To overcome this limitation, beam splitting schemes were proposed in \cite{Zhu2017Joint,wei2018multibeam,wei2018multiICC}, which offer a considerable spectral efficiency gain compared to OMA and single-beam NOMA schemes.
Nevertheless, the performance of the proposed beam splitting schemes highly depend on the optimal antenna allocation for NOMA users which incurs a prohibitively high computational complexity.
In practice, the beamwidth of an analog beam can be controlled via an appropriate analog beamformer design.
In fact, widening the beamwidth can increase the probability of two users to be covered by the same analog beam which facilitates the exploitation of NOMA gain.
Due to the beamwidth control, there is an inevitable power loss at the main beam direction.
Therefore, there is a non-trivial tradeoff between the beamwidth and the system performance.
Yet, such an investigation has not been reported in the literature.
Therefore, a study of beamwidth control-based hybrid mmWave NOMA scheme is necessary to overcome the limitation of narrow analog beams and is potential to improve the system spectral efficiency.

In this paper, we propose a mmWave NOMA scheme using beamwidth control and analyze the performance of the proposed scheme to provide some system design insights.
To this end, we focus on a simple method which controls the beamwidth of an analog beam by putting some antennas into idle mode.
After characterizing the main lobe power loss associated to beamwidth control, we design the asymptotically optimal analog beamformer to maximize the system sum-rate in the large number of antennas regime.
Our performance analysis unveils some interesting insights about the performance gain of NOMA over OMA in mmWave frequency band.

Notations used in this paper are as follows. Boldface capital and lower case letters are reserved for matrices and vectors, respectively. $\mathbb{C}^{M\times N}$ denotes the set of all $M\times N$ matrices with complex entries; ${\left( \cdot \right)^{\mathrm{T}}}$ denotes the transpose of a vector or a matrix and ${\left( \cdot \right)^{\mathrm{H}}}$ denotes the Hermitian transpose of a vector or a matrix;
$\left\lfloor \cdot \right\rfloor$ denote the maximum integer smaller than the input value; $\abs{\cdot}$ denotes the absolute value of a complex scalar
and ${\rm{E}}\left\{ \cdot \right\}$ denotes the expectation of a random variable.
The circularly symmetric complex Gaussian distribution with mean $\mu$ and variance $\sigma^2$ is denoted by ${\cal CN}(\mu,\sigma^2)$.

\section{System Model}
Consider downlink mmWave communications in a single-cell system with one base station (BS) and two downlink users, as shown in Fig. \ref{HybridNOMAStructure}.
The BS is located at the cell center with a cell radius of $D$ meters.
A commonly adopted uniform linear array (ULA)\cite{zhao2017multiuser} is employed at both the BS and user terminals, where the BS is equipped with $N_{\rm{BS}}$ antennas and each user is equipped with $N_{\rm{UE}}$ antennas.
A single-RF chain\footnote{The extension for the proposed scheme to a multi-RF chain BS and more than two users will be considered in our journal version.} BS with a hybrid fully-access architecture is adopted as a first step to unveil the system design insights of the proposed beamwidth control scheme.
In particular, each RF chain equipped at the BS can access all the $N_{\rm{BS}}$ antennas through $N_{\rm{BS}}$ PSs.
On the other hand, for the sake of a simple receiver design, each user has a single RF chain and connects to all its $N_{\rm{UE}}$ antennas via $N_{\rm{UE}}$ PSs.

\begin{figure}[t]
\centering\vspace{-4mm}
\includegraphics[width=0.40\textwidth]{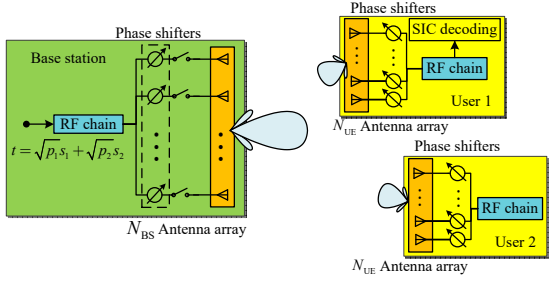}\vspace{-3mm}
\caption{The system model for the proposed beamwidth control-based NOMA mmWave scheme with two users. The proposed scheme exploits switches to activate or idle an antenna and its PS based on the designed beamwidth.}\vspace{-7mm}
\label{HybridNOMAStructure}%
\end{figure}

As shown in Fig. \ref{HybridNOMAStructure}, the key idea of the proposed NOMA scheme is to form a NOMA group through beamwidth control such that the signals of user 1 and user 2 can be superimposed for non-orthogonal transmission through a single-RF chain.
In particular, the condition of forming a NOMA group is that the difference between the angles of departure (AODs) of user 1 and user 2 is within the half-power beamwidth \cite{van2002optimum} of an analog beam from the BS.
Otherwise, there is at least 3 dB of power loss for the user outside the half-power beamwidth.
Hence, efficient communication for this user is hard to guarantee, which leads to a high degree of unfairness.
Therefore, one can imagine that the proposed beamwidth control has a high potential to improve the possibility of forming a NOMA group via widening the analog beam at the BS.
On the other hand, if a NOMA group cannot be formed, a conventional time division multiple access (TDMA) scheme can be employed to provide communications to the two users via allocating the RF chain to user 1 and user 2 within different time slots.
The superimposed signal transmitting through the RF chain at the BS for the formed NOMA group is given by
\vspace{-2mm}
\begin{equation}\label{PureAnalog2}
t = \sqrt{p_1} s_1 + \sqrt{p_2} s_2,\vspace{-2mm}
\end{equation}
where $s_1, s_2 \in \mathbb{C}$ denote the normalized modulated symbols for user 1 and user 2, respectively, with ${\rm{E}}\left\{ {{{\left| {{s_k}} \right|}^2}} \right\} = 1$, $\forall k \in \{1,2\}$.
Variables $p_1$ and $p_2$ denote the power allocation for user 1 and user 2, respectively, satisfying $ p_1 + p_2 \le {p_{{\rm{BS}}}}$ with ${p_{{\rm{BS}}}}$ denoting the maximum transmit power budget for the BS.
In this paper, we consider an equal power allocation to facilitate the performance analysis, i.e., $p_1 = p_2 = \frac{p_{{\rm{BS}}}}{2}$.

On the other hand, we apply the widely adopted Saleh-Valenzuela model \cite{WangBeamSpace2017} for our considered mmWave communication system.
In particular, the channel matrix between the BS and user $k$ ${{\bf{H}}_k} \in \mathbb{C}^{{ {N_{{\rm{UE}}}} \times {N_{{\rm{BS}}}}}}$ is given by
\vspace{-2mm}
\begin{equation}\label{ChannelModel1}
{{\bf{H}}_k} = {\alpha _{k,0}}{{\bf{H}}_{k,0}} + \sum\nolimits_{l = 1}^L {{\alpha _{k,l}}{{\bf{H}}_{k,l}}},\vspace{-2mm}
\end{equation}
where ${\alpha _{k,0}}$ denotes the LOS complex path gain and ${{\bf{H}}_{k,0}}  \in \mathbb{C}^{ N_{\mathrm{UE}} \times N_{\mathrm{BS}} }$ is the LOS channel matrix between the BS and user $k$.
In \eqref{ChannelModel1}, matrix ${\mathbf{H}}_{k,l} \in \mathbb{C}^{ N_{\mathrm{UE}} \times N_{\mathrm{BS}} }$ denotes the $l$-th NLOS channel matrix between the BS and user $k$, ${\alpha _{k,l}}$ denotes the corresponding $l$-th NLOS complex path gain, $1 \le l \le L$, and $L$ denotes the total number of NLOS paths.
In this paper, we define the strong or weak user according to their LOS path gains.
Without loss of generality, we assume that user 1 is the strong user while user 2 is the weak user, i.e., ${\left| {{\alpha _{1,0}}} \right|^2} \ge {\left| {{\alpha _{2,0}}} \right|^2}$.
The channel matrix between user $k$ and the BS for the $l$-th propagation path, ${\mathbf{H}}_{k,l}$, $\forall l \in \{0,\ldots,L\}$, can be generally given by
\vspace{-3.5mm}
\begin{equation}\label{ChannelModel2}
{\mathbf{H}}_{k,l} = {\mathbf{a}}_{\mathrm{UE}} \left(  \phi _{k,l}, {N_{{\mathrm{UE}}}} \right){\mathbf{a}}_{\mathrm{BS}}^{\mathrm{H}}\left( \theta _{k,l}, {N_{{\mathrm{BS}}}} \right),\vspace{-3mm}
\end{equation}
with ${\mathbf{a}}_{\mathrm{BS}}\hspace{-0.5mm}\left(\hspace{-0.5mm} \theta _{k,l}, \hspace{-0.5mm} {N_{{\mathrm{BS}}}} \hspace{-0.5mm} \right) \hspace{-1mm} = \hspace{-1mm} \left[\hspace{-0.5mm} {1, \hspace{-0.5mm}{e^{ - j\hspace{-0.5mm}\frac{2\pi d}{\lambda}\hspace{-0.5mm}   \cos \hspace{-0.5mm}\theta _{k,l} }},\hspace{-0.5mm} \ldots , }{e^{ - j\left(\hspace{-0.5mm}{N_{{\mathrm{BS}}}} \hspace{-0.35mm}-\hspace{-0.35mm} 1\hspace{-0.5mm}\right)\frac{2\pi d}{\lambda}\hspace{-0.5mm}   \cos \hspace{-0.5mm}\theta _{k,l} }}\hspace{-0.5mm}\right]^{\mathrm{T}}$ $ \in \mathbb{C}^{{ {N_{{\mathrm{BS}}}} \times 1}}$
denoting the array response vector \cite{van2002optimum} for the $l$-th path of user $k$ with AOD ${\theta _{k,l}}$ at the BS and
${\mathbf{a}}_{\mathrm{UE}}\left( \phi _{k,l},{N_{{\mathrm{UE}}}} \right) = \left[ 1, {e^{ - j\frac{2\pi d}{\lambda} \cos \phi _{k,l} }}, \ldots ,{e^{ - j{\left({N_{{\mathrm{UE}}}} - 1\right)}\frac{2\pi d}{\lambda} \cos \phi _{k,l} }} \right] ^ {\mathrm{T}} \in \mathbb{C}^{{ {N_{{\mathrm{UE}}}} \times 1}}$ denoting the array response vector for the $l$-th path with angle of arrival (AOA) ${\phi _{k,l}}$ at user $k$.
Note that $\lambda$ denotes the wavelength of the carrier frequency and $d = \frac{\lambda}{2}$ denotes the space between adjacent antennas.
In addition, we assume that the LOS channel state information (CSI), including the AODs ${\theta _{k,0}}$ and the complex path gains ${{\alpha _{k,0}}}$ of both users, is known at the BS via some effective beam tracking techniques, e.g. \cite{BeamTracking}.
Due to the same reason, we can assume that the AOA, ${\phi _{k,0}}$, is known at user $k$, $\forall k \in \{1,2\}$.
In addition, every user would like to steer its receive analog beamformer to its LOS AOA to maximize the desired received signal power and thus we have ${\bf{v}}_k = \frac{1}{\sqrt{N_{{\mathrm{UE}}}}} {\mathbf{a}}_{\mathrm{UE}} \left(  \phi _{k,l}, {N_{{\mathrm{UE}}}} \right)$, $\forall k$.
Denoting ${\bf{w}} \in \mathbb{C}^{{{N_{{\rm{BS}}}} \times 1}}$ as the designed analog beamformer at the BS according the adopted beamwidth control method, with ${\left\| {{{\bf{w}}}} \right\|^2} = 1$, the effective channel of user $k$ can be denoted by
\vspace{-2mm}
\begin{equation}\label{EffectiveChannel}
\widetilde{{h}}_k = {\bf{v}}_k^{\rm{H}}{{\bf{H}}_k}{\bf{w}} \in \mathbb{C}.\vspace{-2mm}
\end{equation}

At the user side, the received signal at user $k$ after processed by the receive analog beamformer is given by
\vspace{-2mm}
\begin{equation}\label{SystemModel}
{y_k} = {\bf{v}}_k^{\rm{H}}{{\bf{H}}_k}{\bf{w}}t + {\bf{v}}_k^{\rm{H}}{{\bf{z}}_k}
=\widetilde{h}_k{t} + {\bf{v}}_k^{\rm{H}}{{\bf{z}}_k}, \forall k,\vspace{-2mm}
\end{equation}
where $\mathbf{z}_k \in \mathbb{C}^{{N_{\rm{UE}} \times 1}}$ is the additive white Gaussian noise (AWGN) at receiving antenna array of user $k$, i.e., $\mathbf{z}_k \sim {\cal CN}(0,\sigma^2 \mathbf{I}_{N_{\rm{UE}}})$ with $\sigma^2$ denoting the noise power.
Similar to the traditional downlink NOMA schemes\cite{WeiTCOM2017}, SIC decoding is performed at the strong user within a NOMA group, while the weak user directly decodes its desired messages by treating the strong user's signal as noise.

\section{Analog Beamformer Design with Beamwidth Control}
Clearly, the key challenge of the proposed scheme is to design an effective analog beamformer taking into account the beamwidth control.
In this section, we first present the proposed beamwidth control method and characterize the associated power loss in the main lobe.
Then, the asymptotically optimal analog beamformer to maximize the system sum-rate is derived via determining the asymptotically optimal main beam direction and the desired beamwidth sequentially.

\begin{figure}[t]
\centering\vspace{-4mm}
\includegraphics[width=2.3in]{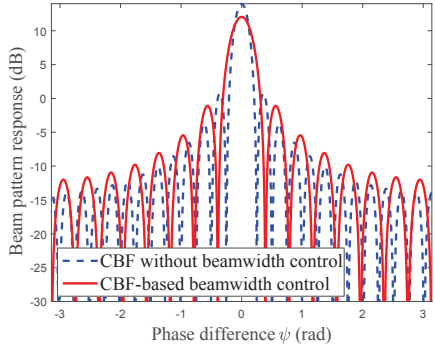}\vspace{-4mm}
\caption{The beam pattern response with/without beamwidth control. The number of antennas at the BS is $N_{\rm{BS}} = 25$ and the main lobe direction is $\theta_0 = 90$ degrees. The desired half-power beamwidth of CBF with beamwidth control is 1.5 times of that of CBF without beamwidth control.}\vspace{-6mm}
\label{BeamPatternWithBeamWidthControl}
\end{figure}

\subsection{Proposed Beamwidth Control}
The most natural and simple way of beamwidth control is to reduce the number of active antennas to widen the analog beamwidth in the hybrid architecture with constant-modulus PSs\footnote{For the case with amplitude-adjustable PSs, more beamwidth control methods can be utilized, which will be detailed in our journal version.}, as shown in Fig. \ref{HybridNOMAStructure}.
Considering an ULA with $N_{\rm{BS}}$ antennas but only $N'_{\rm{BS}}$ consecutive antennas of them are active.
The conventional beamformer (CBF) to generate an analog beam steering towards AOD $\theta_0$ is given by
\vspace{-2mm}
\begin{align}\label{CBFWeight}
&{{\bf{w}}_{{\rm{CBF}}}}\left(\theta_0,N'_{\rm{BS}}\right) \\[-1mm] &=\frac{1}{\sqrt{N'_{\rm{BS}}}}\hspace{-1mm}\left[\hspace{-0.5mm} {1,\hspace{-0.5mm}{e^{ - j\hspace{-0.5mm}\frac{2\pi d}{\lambda}\hspace{-0.5mm}   \cos \hspace{-0.5mm}\theta _{0} }},\hspace{-0.5mm}\ldots,\hspace{-0.5mm}{e^{ - j\left(\hspace{-0.5mm}N'_{\rm{BS}} \hspace{-0.35mm}-\hspace{-0.35mm} 1\hspace{-0.5mm}\right)\frac{2\pi d}{\lambda}\hspace{-0.5mm}   \cos \hspace{-0.5mm}\theta _{0} }}},0,\ldots,0 \right]^{\rm{T}}\hspace{-0.5mm}, \notag
\end{align}
\vspace{-4mm}
\par
\noindent
where ${{{\bf{w}}_{{\rm{CBF}}}}} \in \mathbb{C}^{{N_{\rm{BS}}} \times 1}$, ${\left\| {{{\bf{w}}_{{\rm{CBF}}}}} \right\|^2} = 1$, and its last ${N_{\rm{BS}}} - N'_{\rm{BS}}$ entries are zero means that those antennas are inactive.
The main lobe direction $\theta_0$ and the number of active antennas $N'_{\rm{BS}}$ will be determined latter in this section.
The beam response function characterizes the beamforming gain for the signal departing from an arbitrary direction $\theta$.
In particular, the beam response function of the CBF is obtained as\cite{van2002optimum}:
\vspace{-2mm}
\begin{equation}\label{BeamResponseFunction}
\left|{B_{{\rm{CBF}}}}\left( \psi  \right)\right|^2 = \frac{1}{{N'_{\rm{BS}}}}\frac{{\sin^2 \left( {\frac{{N'_{\rm{BS}}\psi }}{2}} \right)}}{{\sin^2 \left( {\frac{\psi }{2}} \right)}},\vspace{-2mm}
\end{equation}
where $\psi = \frac{2\pi d}{\lambda}\left(\cos\left(\theta\right) - \cos\left(\theta_0\right)\right)$ denotes the phase difference\footnote{Note that, the beam response function in the angle domain $\left|{B_{{\rm{CBF}}}}\left( \theta,\theta_0  \right)\right|$  can be easily obtained via the mapping relationship $\psi = \frac{2\pi d}{\lambda}\left(\cos\left(\theta\right) - \cos\left(\theta_0\right)\right)$.} between the signals departing from directions $\theta_0$ and $\theta$.
In addition, the half-power beamwidth (i.e., -3 dB beamwidth) can be approximated by\cite{van2002optimum}
\vspace{-2mm}
\begin{equation}\label{Beamwidth3dBCBF}
{\psi^{\rm{CBF}}_{\rm{H}}} \approx 0.891\frac{2\pi}{N'_{\rm{BS}}}.\vspace{-2mm}
\end{equation}
According to \eqref{BeamResponseFunction}, the main lobe response for CBF is given by
\vspace{-3mm}
\begin{equation}\label{MLR}
G_{\rm{CBF}} = \left|{B_{{\rm{CBF}}}}\left( 0 \right)\right|^2 = N'_{\rm{BS}}.\vspace{-2mm}
\end{equation}
Combining \eqref{Beamwidth3dBCBF} and \eqref{MLR}, we have
\vspace{-2mm}
\begin{equation}\label{MLRVersus3dBBeamwdith}
G_{\rm{CBF}} \approx 0.891\frac{2\pi}{\psi^{\rm{CBF}}_{\rm{H}}}.\vspace{-2mm}
\end{equation}

It can be observed from \eqref{MLRVersus3dBBeamwdith} that the main lobe response $G_{\rm{CBF}}$ is a monotonically decreasing function with an increasing half-power beamwidth for the CBF.
This is due to the conservation of energy as the beam patterns with different half-power beamwidths have the same amount of total radiation energy.
In fact, it is the expense for widening the analog beamwidth to accommodate a NOMA group.
The tradeoff between the analog beamwidth and the main lobe response can also be observed in Fig. \ref{BeamPatternWithBeamWidthControl}.
Hence, the combination of NOMA with beamwidth control offers a higher flexibility in serving multiple users compared to traditional scheme such as TDMA.
Besides, as observed from \eqref{MLR} and \eqref{MLRVersus3dBBeamwdith}, there is a simple one-to-one mapping between the half-power beamwidth ${\psi^{\rm{CBF}}_{\rm{H}}}$ and the number of active antennas ${N'_{\rm{BS}}}$, and we can determine ${N'_{\rm{BS}}}$ for achieving a desired ${\psi^{\rm{CBF}}_{\rm{H}}}$.

\subsection{Proposed Analog Beamformer}
\subsubsection{Main Beam Direction}
Given the desired beamwidth ${\psi^{\rm{CBF}}_{\rm{H}}}$ or equivalently the number of active antennas ${N'_{\rm{BS}}}$, we need to determine the main beam direction $\theta_0$ of the analog beamformer for the NOMA group.
In particular, rotating the analog beam changes the effective channel gains of the two NOMA users, and thus affects the SIC decoding order as well as the sum-rate.
In this section, for an arbitrary $N'_{\rm{BS}}$, we derive the asymptotically optimal main beam direction to maximize the sum-rate of a NOMA group.

As mentioned before, only LOS CSI is assumed to be known for the analog beamformer design at the BS.
Therefore, to facilitate the analog beamformer design and the performance analysis, we assume a pure LOS mmWave channel in this section, i.e., ${{\bf{H}}_k} = {\alpha _{k,0}}{{\bf{H}}_{k,0}}$.
Note that it is a reasonable assumption since the LOS path gain is generally much stronger than the NLOS path gains\footnote{The system sum-rate taking into account the NLOS paths will be shown in the simulation section.}\cite{Rappaport2013}.

In the proposed scheme, changing the main beam direction $\theta_0$ of the generated analog beam may reverse the effective channel gain order of the two users, which in turn reverses the SIC decoding order and changes the sum-rate.
For the first case, if a selected main beam direction $\widetilde{\theta}_0$ does not change the channel gain order of the two users, we have $\left| {\widetilde h_1} \left(\widetilde{\theta}_0\right) \right| \ge \left| {\widetilde h_2} \left(\widetilde{\theta}_0\right) \right|$, where
${\widetilde{h}_k}\left(\widetilde{\theta}_0\right) = {\bf{v}}_k^{\rm{H}}{{\bf{H}}_k}{{\bf{w}}_{{\rm{CBF}}}}\left(\widetilde{\theta}_0,N'_{\rm{BS}}\right)$ denotes the effective channel of user $k$, $\forall k \in \{1,2\}$.
According to the downlink NOMA protocol\cite{WeiTCOM2017}, the achievable sum-rate of both users is given by
\vspace{-3mm}
\begin{align}\label{SumRate1}
{R}_{{\rm{sum}}}\left(\widetilde{\theta}_0\right) &= {\log _2}\left( {1 + \rho{{{p_1}{{\left| {\widetilde h_1} \left(\widetilde{\theta}_0\right) \right|}^2}}}} \right) \notag\\[-1mm]
&+ {\log _2}\left( {1 + \frac{{\rho{p_2}{{\left| {\widetilde h_2} \left(\widetilde{\theta}_0\right) \right|}^2}}}{{\rho{p_1}{{\left| {\widetilde h_2}\left(\widetilde{\theta}_0\right) \right|}^2} + 1}}} \right),
\end{align}
\vspace{-4mm}\par\noindent
where $\rho = \frac{1}{\sigma^2}$ and the two terms denote the achievable data rates of user 1 and user 2, respectively\footnote{In the considered single-RF chain system, the SIC decoding at the strong user, i.e., user 1, can be guaranteed when $\left| {\widetilde h_1}\left(\widetilde{\theta}_0\right) \right| \ge \left| {\widetilde h_2}\left(\widetilde{\theta}_0\right) \right|$\cite{Tse2005}.}.
On the other hand, in the second case, if a selected main beam direction, $\overline{\theta}_0$, does reverse the channel gain order of the two users, i.e., $\left| {\widetilde h_2}\left(\overline{\theta}_0\right) \right| \ge \left| {\widetilde h_1} \left(\overline{\theta}_0\right)\right|$, similarly, the achievable sum-rate can be written as:
\vspace{-3mm}
\begin{align}\label{SumRate2}
\overline{{R}}_{{\rm{sum}}} \left(\overline{\theta}_0\right) &= {\log _2}\left( {1 + \rho{{{p_2}{{\left| {\widetilde h_2}\left(\overline{\theta}_0\right) \right|}^2}}}} \right) \notag\\[-1mm]
&+ {\log _2}\left( {1 + \frac{{\rho{p_1}{{\left| { \widetilde h_1}\left(\overline{\theta}_0\right) \right|}^2}}}{{\rho{p_2}{{\left| { \widetilde h_1}\left(\overline{\theta}_0\right) \right|}^2} + 1}}} \right).
\end{align}
\vspace{-4mm}\par\noindent
Now, the asymptotically optimal main beam direction is stated in the following theorem.

\begin{Thm}\label{BeamDirection}
For any desired beamwidth: 1) the main beam direction keeping the original channel gain order outperforms the one that reversing the channel gain order in terms of system sum-rate, i.e., $\mathop {\max }\limits_{{\theta}_0}  \mathop {\lim }\limits_{N_{\rm{BS}} \to \infty } {R}_{{\rm{sum}}}\left( {\theta}_0  \right) \ge \mathop {\max }\limits_{{\theta}_0}  \mathop {\lim }\limits_{N_{\rm{BS}} \to \infty } \overline{{R}}_{{\rm{sum}}}\left( {\theta}_0  \right)$;  2) steering the beam to the strong user is asymptotically optimal in terms of the system sum-rate in the large number of antennas regime, i.e., $\theta_{0,1} = \arg\mathop {\max }\limits_{{\theta}_0}  \mathop {\lim }\limits_{N_{\rm{BS}} \to \infty } R_{{\rm{sum}}}\left( {\theta}_0  \right)$.
\end{Thm}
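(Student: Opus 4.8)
The plan is to analyze the asymptotic behavior of the effective channel gains $|\widetilde h_k(\theta_0)|$ as $N_{\rm{BS}}\to\infty$ and reduce both parts of the theorem to comparing two single-variable rate expressions. First I would compute $\widetilde h_k(\theta_0)$ explicitly: since we work in the pure-LOS regime ${\bf H}_k=\alpha_{k,0}{\bf H}_{k,0}$ with ${\bf H}_{k,0}={\bf a}_{\rm UE}(\phi_{k,0},N_{\rm UE}){\bf a}_{\rm BS}^{\rm H}(\theta_{k,0},N_{\rm BS})$ and ${\bf v}_k=\frac{1}{\sqrt{N_{\rm UE}}}{\bf a}_{\rm UE}(\phi_{k,0},N_{\rm UE})$, the receive-side factor ${\bf v}_k^{\rm H}{\bf a}_{\rm UE}(\phi_{k,0},N_{\rm UE})=\sqrt{N_{\rm UE}}$ is a constant, so $\widetilde h_k(\theta_0)=\alpha_{k,0}\sqrt{N_{\rm UE}}\,{\bf a}_{\rm BS}^{\rm H}(\theta_{k,0},N_{\rm BS}){\bf w}_{\rm CBF}(\theta_0,N'_{\rm BS})$. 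The inner product ${\bf a}_{\rm BS}^{\rm H}(\theta_{k,0},N_{\rm BS}){\bf w}_{\rm CBF}(\theta_0,N'_{\rm BS})$ is, up to normalization $1/\sqrt{N'_{\rm BS}}$, exactly the beam-response sum $B_{\rm CBF}(\psi_k)$ with $\psi_k=\frac{2\pi d}{\lambda}(\cos\theta_{k,0}-\cos\theta_0)$; hence $|\widetilde h_k(\theta_0)|^2=|\alpha_{k,0}|^2 N_{\rm UE}\,\frac{|B_{\rm CBF}(\psi_k)|^2}{N'_{\rm BS}}$, which by \eqref{BeamResponseFunction} equals $|\alpha_{k,0}|^2 N_{\rm UE}\,G_k(\theta_0)$ where $G_k(\theta_0)=\frac{1}{(N'_{\rm BS})^2}\frac{\sin^2(N'_{\rm BS}\psi_k/2)}{\sin^2(\psi_k/2)}\in[0,1]$, with $G_k=1$ precisely when $\theta_0=\theta_{k,0}$.

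Next I would establish part 2) first, since it makes part 1) almost immediate. Claim: $\theta_0=\theta_{1,0}$ (steer to the strong user) asymptotically maximizes $R_{\rm sum}(\theta_0)$ over all $\theta_0$ that preserve the channel-gain order. Steering to $\theta_{1,0}$ gives $|\widetilde h_1|^2=|\alpha_{1,0}|^2 N_{\rm UE}$ (its maximal possible value) while $|\widetilde h_2|^2=|\alpha_{1,0}|^2 N_{\rm UE} G_2(\theta_{1,0})$, which remains nonnegative. Looking at \eqref{SumRate1}, the first term ${\log_2}(1+\rho p_1|\widetilde h_1|^2)$ is maximized, and — this is the key structural point — the second term ${\log_2}\bigl(1+\frac{\rho p_2|\widetilde h_2|^2}{\rho p_1|\widetilde h_2|^2+1}\bigr)$ is a \emph{bounded} function of $|\widetilde h_2|^2$: as $|\widetilde h_2|^2\to\infty$ it converges to ${\log_2}(1+p_2/p_1)={\log_2}2=1$ bit (using equal power allocation $p_1=p_2$). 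Meanwhile, as $N_{\rm BS}\to\infty$ with $N'_{\rm BS}$ scaling so the beamwidth is fixed, $|\widetilde h_1(\theta_{1,0})|^2\to\infty$, so the first term diverges like ${\log_2}(\rho p_1|\alpha_{1,0}|^2 N_{\rm UE})+o(1)$. Thus the first term dominates the sum-rate asymptotically, and since any other $\theta_0$ yields $|\widetilde h_1(\theta_0)|^2\le|\alpha_{1,0}|^2 N_{\rm UE}$ strictly less (for $\theta_0\ne\theta_{1,0}$ the gain $G_1<1$, costing a constant factor inside the log, hence a bounded additive loss, but the second term can gain at most $1$ bit) — a careful accounting shows no other choice can beat it in the limit. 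I would make this rigorous by writing $R_{\rm sum}(\theta_0)={\log_2}(\rho p_1|\alpha_{1,0}|^2 N_{\rm UE} G_1(\theta_0)+1)+f(|\widetilde h_2(\theta_0)|^2)$ with $f$ bounded by $1$, and comparing against $\theta_{1,0}$ term by term.

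For part 1), I would show $\max_{\theta_0}\lim \overline R_{\rm sum}(\theta_0)\le\max_{\theta_0}\lim R_{\rm sum}(\theta_0)$. The order-reversing case requires $|\widetilde h_2(\overline\theta_0)|\ge|\widetilde h_1(\overline\theta_0)|$, i.e. $|\alpha_{2,0}|^2 G_2(\overline\theta_0)\ge|\alpha_{1,0}|^2 G_1(\overline\theta_0)$; since $|\alpha_{1,0}|^2\ge|\alpha_{2,0}|^2$ this forces $G_2(\overline\theta_0)\ge G_1(\overline\theta_0)$, so in particular $|\widetilde h_2(\overline\theta_0)|^2\le|\alpha_{2,0}|^2 N_{\rm UE}\le|\alpha_{1,0}|^2 N_{\rm UE}$. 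By the same bounded-second-term reasoning, $\overline R_{\rm sum}(\overline\theta_0)\le{\log_2}(\rho p_2|\widetilde h_2(\overline\theta_0)|^2+1)+1\le{\log_2}(\rho p_1|\alpha_{1,0}|^2 N_{\rm UE}+1)+1$ asymptotically, whereas steering to $\theta_{1,0}$ achieves at least ${\log_2}(\rho p_1|\alpha_{1,0}|^2 N_{\rm UE}+1)$ in the dominant term plus a nonnegative second term — and a sharper comparison (the $+1$ bit from the weak user's term is available in \emph{both} regimes) shows the order-preserving optimum is at least as large. The main obstacle I anticipate is making the interchange of $\max$ and $\lim$ legitimate and handling the dependence of $N'_{\rm BS}$ on $N_{\rm BS}$: one must fix a scaling (e.g. $N'_{\rm BS}=\lfloor c N_{\rm BS}\rfloor$ for the desired fixed beamwidth) so that $|\widetilde h_k(\theta_{k,0})|^2\to\infty$ while the \emph{relative} gains $G_k$ stabilize, and one must verify uniformity of the convergence over $\theta_0$ so that the supremum behaves well in the limit — the sidelobe structure of $G_k(\theta_0)$ (its zeros and the $N'_{\rm BS}$-dependent ripple) is where the technical care is needed.
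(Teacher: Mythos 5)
Your overall route is the same as the paper's: restrict attention to the pure-LOS effective gains, note that the weak user's rate term saturates at $\log_2\left(1+\tfrac{p_2}{p_1}\right)$ while the strong user's term grows with its effective gain, and reduce the comparison to the peak gains $\left|\alpha_{1,0}\right|^2$ versus $\left|\alpha_{2,0}\right|^2$. However, there are two concrete gaps. First, a normalization error: with the paper's definition \eqref{BeamResponseFunction}, the transmit-side inner product satisfies $\bigl|{\bf{a}}_{\rm{BS}}^{\rm{H}}(\theta_{k,0},N_{\rm{BS}})\,{\bf{w}}_{\rm{CBF}}(\theta_0,N'_{\rm{BS}})\bigr|^2=\left|B_{\rm{CBF}}(\psi_k)\right|^2$ exactly, so the peak effective gain is $\left|\alpha_{k,0}\right|^2 N_{\rm{UE}}N'_{\rm{BS}}=\left|\alpha_{k,0}\right|^2 N_{\rm{UE}}G_{\rm{CBF}}$ (as in \eqref{EffectiveChannelMax1}--\eqref{EffectiveChannelMax2}), not $\left|\alpha_{k,0}\right|^2 N_{\rm{UE}}$; your extra division by $N'_{\rm{BS}}$ caps your $G_k$ at $1$, which contradicts your own later claim that $\left|\widetilde h_1(\theta_{1,0})\right|^2\to\infty$ and would make the whole ``diverging first term'' mechanism collapse.

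Second, and more substantively, your part-2 argument leaves the decisive step undone. You bound the weak user's term by $1$ bit and note the strong user's term is maximized at $\theta_{1,0}$, but then concede that steering elsewhere costs only ``a bounded additive loss'' in the first term while possibly gaining up to $1$ bit in the second -- and the ``careful accounting'' you invoke to rule this trade-off out is never supplied (indeed, with your normalization it cannot be, since nothing diverges, and at $\theta_{1,0}$ you only claim $\left|\widetilde h_2\right|^2$ ``remains nonnegative''). The paper closes exactly this gap by using the NOMA-group formation condition: both users lie within the half-power beamwidth, hence for \emph{every} admissible $\theta_0\in\left[\theta_{0,1},\theta_{0,2}\right]$ both $\rho\left|\widetilde h_1(\theta_0)\right|^2$ and $\rho\left|\widetilde h_2(\theta_0)\right|^2$ grow without bound, so the weak user's term converges to the \emph{same} constant $\log_2\left(1+\tfrac{p_2}{p_1}\right)$ for all admissible directions (yielding \eqref{HighSNRRateNOMApROOF1}--\eqref{HighSNRRateNOMApROOF2}); the second term then drops out of the comparison entirely and the optimum is determined solely by maximizing the stronger effective gain, giving $\theta_{0,1}$ for case 1, $\theta_{0,2}$ for case 2, and part 1 from $\left|\alpha_{1,0}\right|\ge\left|\alpha_{2,0}\right|$ with the same $G_{\rm{CBF}}$ in \eqref{EffectiveChannelMax1}--\eqref{EffectiveChannelMax2}. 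You need to state and use this uniform divergence over the admissible steering range (or an equivalent argument) for the proof to go through; your closing remarks about fixing the $N'_{\rm{BS}}$ scaling and uniformity over $\theta_0$ correctly identify where the care is needed, but they are not carried out.
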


\begin{proof}
Firstly, it is clear that the optimal main beam direction should be located within the range of angles spanned by the two NOMA users, i.e., $\theta_0 \in \left[\theta_{0,1},\theta_{0,2}\right]$, as steering the beam outside this range degrades both users' effective channel gains.
Since only the two users within the half-power beamwidth can be clustered as a NOMA group, in the large number of antennas regime, we have $\mathop {\lim }\limits_{N_{\rm{BS}} \to \infty } \rho{\left| {\widetilde h_1}\left({\theta}_0\right) \right|^2} \to \infty$ and $\mathop {\lim }\limits_{N_{\rm{BS}} \to \infty } \rho{\left| {\widetilde h_2}\left({\theta}_0\right) \right|^2} \to \infty$, $\forall \theta_0 \in \left[\theta_{0,1},\theta_{0,2}\right]$.
As a result, the asymptotic sum-rates for both cases in \eqref{SumRate1} and \eqref{SumRate2} can be rewritten as
\vspace{-2mm}
\begin{align}
\mathop {\lim }\limits_{N_{\rm{BS}} \to \infty } R_{{\rm{sum}}} \left(\widetilde{\theta}_0\right) &= {\log _2}\left( \rho{{{{p_1}{{\left| {\widetilde h_1} \hspace{-0.5mm} \left(\widetilde{\theta}_0\right) \right|}^2}}}} \right)
\hspace{-0.5mm}+\hspace{-0.5mm} {\log _2}\left( {1\hspace{-0.5mm} + \hspace{-0.5mm}\frac{p_2}{p_1}} \right) \notag\\[-0.75mm]
& = {\log _2}\left( \rho{{{{p_{{\rm{BS}}}}{{\left| {\widetilde h_1} \hspace{-0.5mm} \left(\widetilde{\theta}_0\right) \right|}^2}}}} \right) \;\text{and}\label{HighSNRRateNOMApROOF1}\\[-0.75mm]
\mathop {\lim }\limits_{N_{\rm{BS}} \to \infty } \overline{R}_{{\rm{sum}}} \left(\overline{\theta}_0\right) &={\log _2}\left( {\rho{{{p_2}{{\left| {\widetilde h_2}\left(\overline{\theta}_0\right) \right|}^2}}}} \right)
\hspace{-0.5mm}+\hspace{-0.5mm} {\log _2}\left( {1 \hspace{-0.5mm}+\hspace{-0.5mm} \frac{{{p_1}}}{{{p_2}}}} \right)\notag\\[-0.75mm]
&={\log _2}\left( {\rho{{{p_{{\rm{BS}}}}{{\left| {\widetilde h_2}\left(\overline{\theta}_0\right) \right|}^2}}}} \right),\label{HighSNRRateNOMApROOF2}
\end{align}
\vspace{-4mm}\par\noindent
respectively.
For both cases, in the large number of antennas regime, we can observe that the sum-rate of the two NOMA users only scales with the strength of the larger effective channel gain after analog beamforming.
Therefore, we have
\vspace{-6mm}
\begin{align}
\theta_{0,1} &= \arg\mathop { \max}\limits_{{\theta}_0}  \mathop {\lim }\limits_{N_{\rm{BS}} \to \infty } R_{{\rm{sum}}}\left( {\theta}_0  \right) \;\text{and}\; \label{OptimalMainBeamDirection}\\[-1mm]
\theta_{0,2} &= \arg\mathop { \max}\limits_{{\theta}_0}  \mathop {\lim }\limits_{N_{\rm{BS}} \to \infty } \overline{R}_{{\rm{sum}}}\left( {\theta}_0  \right),\label{OptimalMainBeamDirection2}
\end{align}
\vspace{-4mm}\par\noindent
as $\theta_{0,1}$ and $\theta_{0,2}$ can maximize the effective channel gain ${\left| {\widetilde h_1} \left({\theta}_0\right) \right|^2}$ and ${\left| {\widetilde h_2} \left({\theta}_0\right) \right|^2}$ in \eqref{HighSNRRateNOMApROOF1} and \eqref{HighSNRRateNOMApROOF2}, respectively.
The maximum effective channel gains of user 1 and user 2 in the considered two cases can be obtained by
\vspace{-2mm}
\begin{align}
{\left| {\widetilde h_1} \left({\theta}_{0,1}\right) \right|^2} &= {\left|{\alpha _{1,0}}\right|}^2 {{N_{{\rm{UE}}}}G_{\rm{CBF}}} \;\text{and}\; \label{EffectiveChannelMax1}\\[-1mm]
{\left| {\widetilde h_2} \left({\theta}_{0,2}\right) \right|^2} &= {\left|{\alpha _{2,0}}\right|}^2 {{N_{{\rm{UE}}}}G_{\rm{CBF}}},\label{EffectiveChannelMax2}
\end{align}
\vspace{-4mm}\par\noindent
respectively, where the receiving beamforming gain ${N_{{\rm{UE}}}}$ is obtained via ${\bf{v}}_k = \frac{1}{\sqrt{N_{{\mathrm{UE}}}}}{\mathbf{a}}_{\mathrm{UE}} \left(  \phi _{k,0}, {N_{{\mathrm{UE}}}} \right)$, $\forall k = \{1,2\}$.
Here, $G_{\rm{CBF}}$ denotes the main lobe response of transmitting analog beamforming, which is determined by the adopted beamwidth.
Moreover, rotating a beam does not change its main lobe response.
Hence, we have the same main lobe response for the analog beams steered to user 1 and user 2, respectively.
Furthermore, due to ${{\left| {{\alpha _{2,0}}} \right|}} < {{\left| {{\alpha _{1,0}}} \right|}}$, $\mathop {\max }\limits_{{\theta}_0}  \mathop {\lim }\limits_{N_{\rm{BS}} \to \infty } {R}_{{\rm{sum}}}\left( {\theta}_0  \right) \ge \mathop {\max }\limits_{{\theta}_0}  \mathop {\lim }\limits_{N_{\rm{BS}} \to \infty } \overline{{R}}_{{\rm{sum}}}\left( {\theta}_0  \right)$ holds, which completes the proof of the first part of Theorem \ref{BeamDirection}.
Then, combining this result with \eqref{OptimalMainBeamDirection}, we can conclude that steering the generated analog beam to the strong user, i.e., user 1, can maximize the system sum-rate in the large number of antennas regime, which completes the proof of the second part of Theorem \ref{BeamDirection}.
\end{proof}

\begin{Remark}
The results obtained in this paper provide generalization and justification to the commonly adopted distance-based user pairing, e.g. \cite{Dingtobepublished}, in NOMA systems.
\end{Remark}
\subsubsection{The desired beamwidth}
After determining the asymptotically optimal main beam direction for an arbitrary analog beam, the optimal desired half-power beamwidth can be obtained based on the phase difference $\psi_{12} = \frac{2\pi d}{\lambda}\left(\cos\left(\theta_{0,1}\right) - \cos\left(\theta_{0,2}\right)\right)$ between user 1 and user 2.
Since the system sum-rate is proportional to the main lobe response in \eqref{HighSNRRateNOMApROOF1} while the main lobe response is inversely proportional to the desired beamwidth in \eqref{Beamwidth3dBCBF}.
Therefore, the desired beamwidth should be as small as possible to maximize the system sum-rate.
When all the ${N_{\rm{BS}}}$ antennas are active, the minimum half-power beamwidth is given by $\psi _{\rm{H}}^{\rm{min}} = 0.891\frac{2\pi}{N_{\rm{BS}}}$, i.e.,
without applying any beamwidth control.
Therefore, when $\left|{\psi _{12}}\right| \le \frac{{\psi^{\rm{min}} _{\rm{H}}}}{2}$, increasing the beamwidth is not necessary.
The analog beamformer is given by CBF with the whole antenna array ${\bf{w}} = {{\bf{w}}_{{\rm{CBF}}}}\left( {{\theta _{1,0}},N_{{\rm{BS}}}} \right)$ and there is no power loss at the main beam direction.
On the other hand, when $\frac{{\psi^{\rm{min}} _{\rm{H}}}}{2} < \left|{\psi _{12}}\right|$, we need to reduce the number of activated antennas such that $\frac{\psi _{\rm{H}}}{2} = 0.891\frac{\pi}{N'_{{\rm{BS}}}} \ge \left|{\psi _{12}}\right|$.
In this case, the optimal number of active antennas should be ${N'_{\rm{BS}}} = \left\lfloor 0.891\frac{\pi}{\left|{\psi _{12}}\right|}\right\rfloor$ to minimize the main lobe power loss and the analog beamformer is given by ${\bf{w}} = {{\bf{w}}_{{\rm{CBF}}}}\left( {{\theta _{1,0}},N'_{\rm{BS}}} \right)$.
In summary, the asymptotically optimal number of active antennas or equivalently the main lobe response can be given by
\vspace{-2mm}
\begin{equation}\label{OptimalMLR}
G_{\rm{CBF}} = \min \left({N_{{\rm{BS}}}}, \left\lfloor 0.891\frac{\pi}{\left|{\psi _{12}}\right|}\right\rfloor\right).\vspace{-1mm}
\end{equation}
It can be observed from \eqref{OptimalMLR} that the larger the phase difference between the two users, the smaller the main lobe response of the analog beam to accommodate the formed NOMA group.
Therefore, one can imagine that we prefer to cluster two users with similar AODs as a NOMA group to reduce the incurred the main lobe power loss.
Note that, the proposed scheme is suitable for various practical communication scenarios where the users experience the similar AODs from the transmitter, such as a cluster of wireless sensors located closely.

\section{Performance analysis}
In this section, we analyze the sum-rate performance of the proposed mmWave NOMA scheme with the adopted beamwidth control.
More importantly, a sufficient condition for the proposed scheme offering a superior spectral efficiency compared to the baseline TDMA scheme is derived.

Substituting \eqref{EffectiveChannelMax1} and \eqref{OptimalMLR} into \eqref{HighSNRRateNOMApROOF1}, we can obtain the asymptotic system sum-rate of the proposed NOMA scheme in the high number of antennas regime as follows:
\vspace{-2mm}
\begin{align}\label{SumRateNOMA}
&\mathop {\lim }\limits_{N_{\rm{BS}} \to \infty } R^{\rm{NOMA}}_{{\rm{sum}}} \notag\\[-1mm]
&=\hspace{-1mm}{\log _2}\hspace{-1mm}\left( {\rho{{{p_{{\rm{BS}}}}{\left|{\alpha _{1,0}}\right|}^2 {N_{{\rm{UE}}}}\min \hspace{-1mm}\left({N_{{\rm{BS}}}}, \left\lfloor 0.891\frac{\pi}{\left|{\psi _{12}}\right|}\right\rfloor\right)\hspace{-1mm}}}} \right).
\end{align}
\vspace{-3mm}\par\noindent

Considering a baseline TDMA scheme with equal time allocation for the two users to share the single-RF chain, the sum-rate can be given by
\vspace{-2mm}
\begin{align}\label{SumRateOMA}
\hspace{-3mm}\mathop {\lim }\limits_{N_{\rm{BS}} \to \infty } R^{\rm{TDMA}}_{{\rm{sum}}} &=\frac{1}{2}{\log _2}\left( 1+ {\rho{{p_{{\rm{BS}}}}{\left|{\alpha _{1,0}}\right|}^2 {{N_{{\rm{UE}}}}{N_{{\rm{BS}}}}}}} \right) \notag\\[-0.5mm]
\hspace{-3mm}&+ \frac{1}{2}{\log _2}\left( 1+ {\rho{{p_{{\rm{BS}}}}{\left|{\alpha _{2,0}}\right|}^2 {{N_{{\rm{UE}}}}{N_{{\rm{BS}}}}}}} \right),
\end{align}
\vspace{-4mm}\par\noindent
where the two terms denote the achievable data rates of user 1 and user 2 for the baseline scheme, respectively.
In the large number of antennas regime, i.e., ${N_{\rm{BS}} \to \infty }$, the asymptotic sum-rate of the baseline TDMA scheme is given by
\vspace{-2mm}
\begin{equation}\label{AsySumRateOMA}
\mathop {\lim }\limits_{N_{\rm{BS}} \to \infty } R^{\rm{TDMA}}_{{\rm{sum}}} ={\log _2}\left( {\rho{{p_{{\rm{BS}}}}{\left|{\alpha _{1,0}}\right|}{\left|{\alpha _{2,0}}\right|} {{N_{{\rm{UE}}}}{N_{{\rm{BS}}}}}}} \right).\vspace{-2mm}
\end{equation}

Comparing \eqref{SumRateNOMA} and \eqref{AsySumRateOMA}, the system sum-rate gain of the proposed NOMA scheme over the baseline TDMA scheme can be given by
\vspace{-2mm}
\begin{align}
&\mathop{\lim }\limits_{N_{\rm{BS}} \to \infty } R^{\rm{NOMA}}_{{\rm{sum}}} - R^{\rm{TDMA}}_{{\rm{sum}}}\notag\\[-1mm]
&= {\log _2}\left(
\frac{ {\alpha _{12}} \min \left({N_{{\rm{BS}}}}, \left\lfloor 0.891\frac{\pi}{\left|{\psi _{12}}\right|}\right\rfloor\right)}{{N_{{\rm{BS}}}}} \right),
\end{align}
\vspace{-3mm}\par\noindent
where ${\alpha _{12}} = \frac{{\left|{\alpha _{1,0}}\right|}}{{\left|{\alpha _{2,0}}\right|}} \ge 1$ denotes the LOS path gain ratio between user 1 and user 2.
Therefore, the sufficient condition for the proposed scheme outperforming the baseline TDMA scheme, i.e., $\mathop{\lim }\limits_{N_{\rm{BS}} \to \infty } R^{\rm{NOMA}}_{{\rm{sum}}} - R^{\rm{TDMA}}_{{\rm{sum}}} \ge 0$, can be given by
\vspace{-1mm}
\begin{equation}\label{SuffCondition}
{ {\alpha _{12}} \min \left({N_{{\rm{BS}}}}, \left\lfloor 0.891\frac{\pi}{\left|{\psi _{12}}\right|}\right\rfloor\right)} \ge {{N_{{\rm{BS}}}}}.\vspace{-1mm}
\end{equation}
It can be observed that when ${\left|{\psi _{12}}\right|} \le {\psi^{\rm{min}} _{\rm{H}}}$, the sufficient condition in \eqref{SuffCondition} is guaranteed to be satisfied.
In other words, the proposed NOMA scheme outperforms the baseline TDMA scheme when both users have similar AODs.
Besides, when ${\left|{\psi _{12}}\right|} > {\psi^{\rm{min}} _{\rm{H}}}$, the proposed NOMA scheme has a sum-rate gain over the baseline TDMA scheme only if ${ {\alpha _{12}}\left\lfloor 0.891\frac{\pi}{\left|{\psi _{12}}\right|}\right\rfloor} \ge {{N_{{\rm{BS}}}}}$.
Therefore, we can define a \emph{sum-rate gain region} in terms of both ${\psi _{12}}$ and ${\alpha _{12}}$, as will be shown in the simulation section, wherein the proposed scheme has a superior spectral efficiency performance than the baseline TDMA scheme.

\section{Simulation Results}
In this section, we use simulations to evaluate the performance of our proposed beamwidth control-based mmWave NOMA scheme.
We consider a hybrid mmWave communication system with a carrier frequency at $28$ GHz.
There are one LOS path and $L = 10$ NLOS paths for the channel model in \eqref{ChannelModel1} and the path loss models for LOS and NLOS paths follow Table I in \cite{AkdenizChannelMmWave}.
Both users are randomly and uniformly distributed in the single-cell with a cell radius of $D = 200$ meters with their maximum AOD difference as $10$ degrees\footnote{Since the two users with similar AODs are preferred to be grouped as a NOMA group, the users with AOD difference larger than $10$ degrees can be excluded via user scheduling.}.
The maximum transmit power of the BS is 30 dBm, i.e., ${p_{{\rm{BS}}}} = 30$ dBm and the noise power at both users is assumed identical with $\sigma^2 = -88$ dBm.
The number of antennas equipped at the BS $N_{\mathrm{BS}}$ ranges from $16$ to $64$ and the number of antennas equipped at each user terminal is $N_{\mathrm{UE}} = 8$.
\subsection{Sum-rate Gain Region and the Probability of Forming a NOMA Group}
\begin{figure}[t]
\centering\vspace{-4mm}
\includegraphics[width=2.8in]{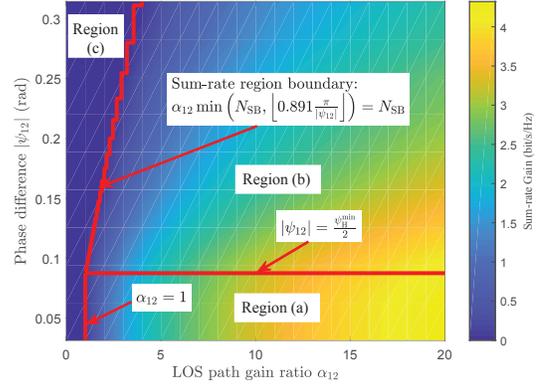}\vspace{-4mm}
\caption{An illustration of the sum-rate gain region of the proposed scheme with respect to the baseline TDMA scheme.}
\label{NOMARegion_CBF}
\end{figure}

\begin{figure}[t]
\centering\vspace{-4mm}
\includegraphics[width=2.7in]{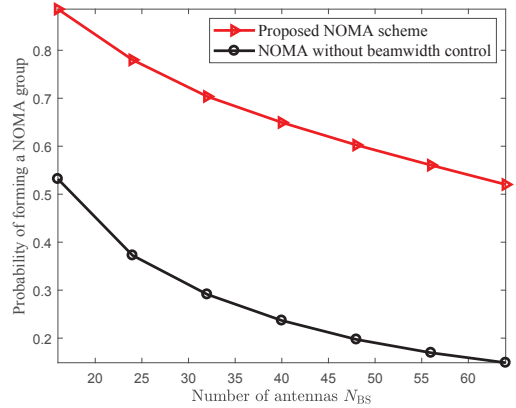}\vspace{-4mm}
\caption{The probability of forming a NOMA group versus the number of antennas equipped at the BS with/without beamwidth control.}\vspace{-6mm}
\label{ProbabilityNOMA}
\end{figure}

Considering two users with different LOS path gain ratio and different phase differences, Fig. \ref{NOMARegion_CBF} illustrates the \emph{sum-rate gain region} (denoted by region (a) and region (b)) of the proposed scheme with respect to the baseline TDMA scheme with $N_{\rm{BS}} = 32$.
The derived analytical results determining the region boundaries are indicated with equations and shown with red lines in Fig. \ref{NOMARegion_CBF}.
The simulation results of the sum-rate gain of the proposed scheme over the baseline TDMA scheme are also illustrated for comparison.
It can be observed that there is no sum-rate improvement of the proposed scheme in region (c) due to the main lobe power loss and hence beamwidth control is not needed in this region.
We can also observe that the sufficient condition derived in \eqref{SuffCondition} can accurately predict whether the proposed NOMA scheme offers a higher sum-rate compared to the baseline TDMA scheme or not.
In addition, it can be observed that the horizonal line $\left|{\psi _{12}}\right| = \frac{{\psi^{\rm{min}} _{\rm{H}}}}{2}$ divides the sum-rate gain region into two parts: region (a) and region (b).
In fact, region (a) is the sum-rate gain region of the conventional NOMA scheme without beamwidth control.
The proposed NOMA scheme substantially extends the sum-rate gain region from (a) to (b) through controlling the beamwidth effectively, which leads to a higher probability of forming a NOMA group.

The probability of forming a NOMA group versus the number of antennas equipped at the BS $N_{\rm{BS}}$ is shown in Fig. \ref{ProbabilityNOMA}.
We can observe that the probability of forming a NOMA group can be increased substantially with employing the proposed beamwidth control method.
In addition, the probability of forming a NOMA group decreases with increasing $N_{\rm{BS}}$.
In fact, the larger array employed at the BS, the smaller the analog beamwidth, and thus the smaller the probability of forming a NOMA group.

\subsection{Average Sum-rate versus the Number of Antennas at the BS}

Fig. \ref{SumRateSchemes} depicts the average system sum-rate of the proposed scheme versus the number of antennas equipped at the BS $N_{\rm{BS}}$.
The performances for the conventional NOMA scheme without beamwidth control and the baseline TDMA scheme are also shown for comparison.
We can observe that the simulation result matches tightly with our performance analysis even though the number of antennas is finite.
In addition, both NOMA schemes outperform the baseline TDMA scheme since NOMA allows the two users to share a RF chain simultaneously in hybrid mmWave communication systems.
It can be observed that the performance gain of NOMA without beamwidth control over the TDMA scheme decreases with an increasing $N_{\rm{BS}}$.
In fact, the probability of forming a NOMA group without beamwidth control vanishes dramatically with increasing $N_{\rm{BS}}$ as shown in Fig. \ref{ProbabilityNOMA}.
In contrast, the proposed scheme still offers a considerable sum-rate gain in the larger number of antennas regime.
This is because beamwidth control provides a higher probability of forming a NOMA group and the NOMA gain can be exploited more efficiently.

\begin{figure}[t]
\centering\vspace{-2mm}
\includegraphics[width=2.7in]{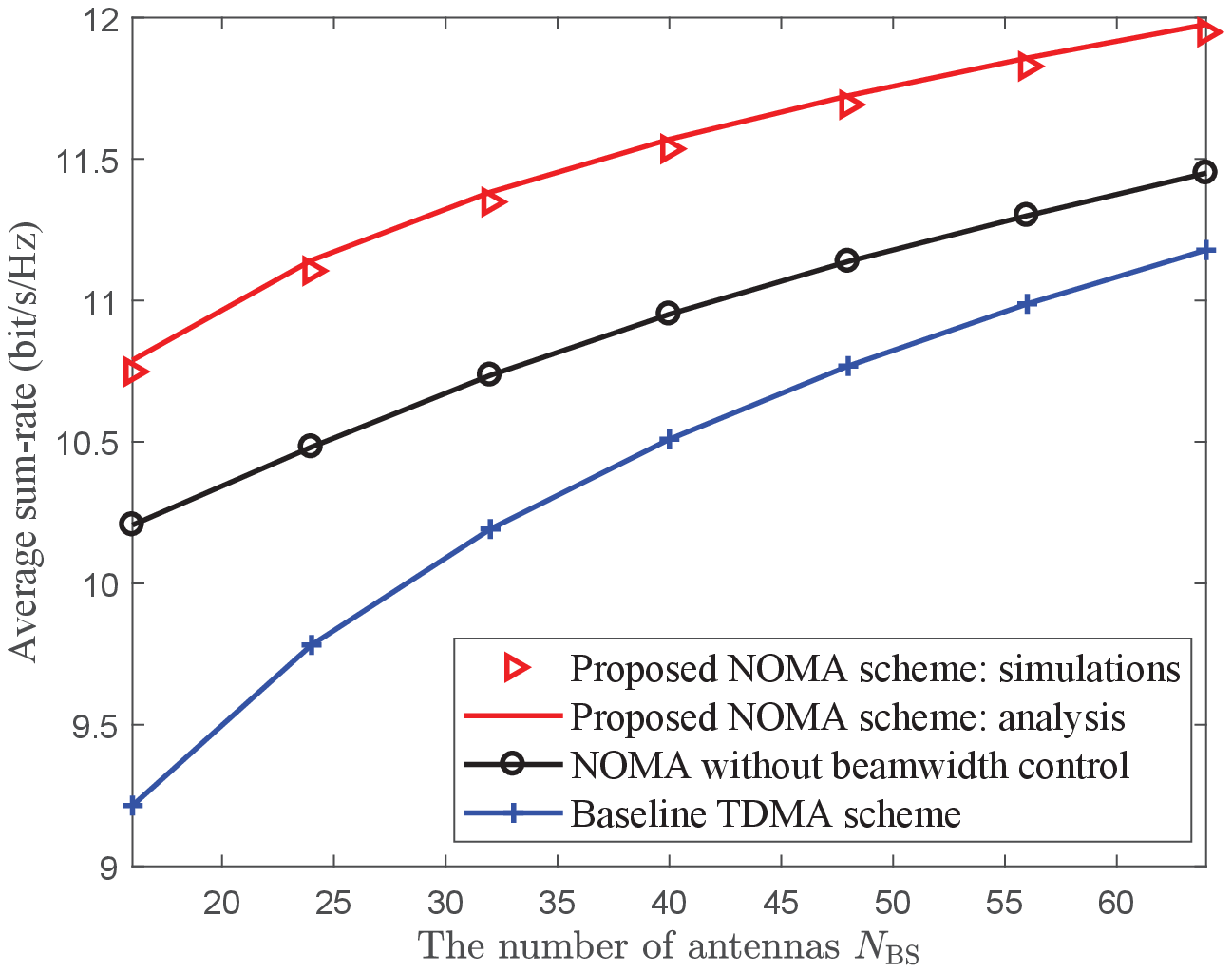}\vspace{-4mm}
\caption{Average sum-rate versus the number of antennas equipped at the BS.}\vspace{-6mm}
\label{SumRateSchemes}
\end{figure}

\section{Conclusion}

In this paper, we proposed a novel beamwidth control-based mmWave NOMA scheme and analyzed its performance.
In particular, the proposed scheme overcomes the fundamental limit of narrow beams in hybrid mmWave systems through beamwidth control, which facilitates the exploitation of NOMA transmission to enhance the system spectral efficiency.
The proposed beamwidth control method can widen the analog beamwidth via setting some of antennas at the BS as inactive.
Through characterizing the main lobe power loss due to the beamwidth control, the asymptotically optimal analog beamformer design to maximize the system sum-rate was proposed.
We analyzed the system sum-rate of the proposed NOMA scheme and derived the sufficient condition for the proposed scheme outperforming the baseline TDMA scheme.
Simulation results verified our performance analysis and demonstrated that the proposed scheme with beamwidth control offers a considerable spectral efficiency gain over the conventional TDMA and NOMA schemes without beamwidth control.


\end{document}